\documentclass[12pt,a4]{article}
\pdfoutput=1
\usepackage{amsmath, amsthm, amsfonts}
\usepackage{graphicx}
\usepackage[affil-it, auth-sc-lg]{authblk}
\graphicspath{{Plots/}}
\usepackage{natbib}
\bibliographystyle{dcu}


\allowdisplaybreaks[1]
\newtheorem{thm}{Theorem}

\DeclareSymbolFont{AMSb}{U}{msb}{m}{n}
\DeclareMathSymbol{\A}{\mathord}{AMSb}{"41}
\DeclareMathSymbol{\B}{\mathord}{AMSb}{"42}
\DeclareMathSymbol{\Fban}{\mathord}{AMSb}{"46}
\DeclareMathSymbol{\Gban}{\mathord}{AMSb}{"47}
\DeclareMathSymbol{\R}{\mathord}{AMSb}{"52}
\DeclareMathSymbol{\bS}{\mathord}{AMSb}{"53}
\DeclareMathSymbol{\Hil}{\mathord}{AMSb}{"48}
\DeclareMathSymbol{\Kil}{\mathord}{AMSb}{'113}
\newcommand{\1}{\mathord{\boldsymbol{1}}}
\newcommand{\Lin}{\mathord{\mathcal{L}}}
\newcommand{\cF}{\mathord{\mathcal{F}}}
\newcommand{\cG}{\mathord{\mathcal{G}}}
\newcommand{\cB}{\mathord{\mathcal{B}}}
\renewcommand{\P}{\mathord{\mathrm{P}}}
\newcommand{\E}{\mathord{\mathrm{E}}}
\newcommand{\Cov}{\mathord{\mathrm{Cov}}}
\newcommand{\I}{\mathord{\mathrm{I}}}
\newcommand{\Nor}{\mathord{\mathcal{N}}}

\def\keywords{\vspace{.5em}
{\textit{Keywords}:\,\relax%
}}
\def\endkeywords{\par}

\begin{document}

\author{Giovanni Petris%
\thanks{Email: \texttt{gpetris@gmail.com}}}
\affil{University of Arkansas}
\title{A Bayesian framework for functional\\ time series analysis} 
\date{November 2013}

\maketitle

\begin{abstract}
  The paper introduces a general framework for statistical analysis of
  functional time series from a Bayesian perspective. The proposed
  approach, based on an extension of the popular dynamic linear model
  to Banach-space valued observations and states, is very flexible but
  also easy to implement in many cases. For many kinds of data, such
  as continuous functions, we show how the general theory of
  stochastic processes provides a convenient tool to specify priors
  and transition probabilities of the model. Finally, we show how
  standard Markov chain Monte Carlo methods for posterior simulation
  can be employed under consistent discretizations of the data.  
\end{abstract}
\keywords
Functional time series, dynamic linear model, probability on Banach
spaces. 
\endkeywords

\section{Introduction} 
Time series data consisting of individual high or infinite dimensional
observations are becoming more and more common in many applied
areas. As a consequence there is a need to develop models and
algorithms for the analysis and forecasting of this kind of
data. Clearly any statistically sound model should account for the
temporal dependence of the data, in addition to a possibly complex
correlation structure within the observations made at a specific time
point. Statistician have been working on methods for the analysis of
functional data for several years; early references are
\citet{Ramsay+Dalzell:1991} and \citet{Grenander:1981}, the books by
\citet{Ramsay+Silverman:2006, Ramsay+Silverman:2002},
\citet{Ramsay+Hooker+Graves:2009}, and \citet{Ferraty+Vieu:2006}
provide good entry points to the recent literature. More recently,
there has been an interest also in models and tools for time series of
functional data, see for example
\citet{Aue+Gabrys+Horvath+Kokoszka:2009, Ding+Lin+Zhong:2009,
  Horvath+Huskova+Kokoszka:2010, Hyndman+Shang:2009,
  Kargin+Onatski:2008, Shen:2009}.  The review papers by
\citet{Mas+Pumo:2010} and \citet{Hormann+Kokoszka:2012} contain
up-to-date references, while the books by \citet{Bosq:2000} and
\citet{Horvath+Kokoszka:2012} provide a comprehensive
background. However, a fatisfactory treatment of functional time
series from a Bayesian perspective has been so far elusive. Our
research aims at filling this gap, providing a flexible and
easy-to-use class of models for Bayesian analysis of functional time
series.

From a methodological point of view, the main focus of the paper is
the extension of the highly successful dynamic linear model to
function spaces. Related references are \citet{Falb:1967} and
\citet{Bensoussan:2003}. Their extensions, however, are different from
the one suggested in the present paper, since they focus on
continuous-time processes and, most importantly, they do not provide
algorithms that are well-suited for practical applications. 

The layout of the paper is as follows. We introduce in
Section~\ref{sec:rv} the basic notions related to Banach space-valued
random variables that will be needed for the subsequent
development. The model we propose is discussed in
Section~\ref{sec:fdlm}, where also Kalman filter and smother in the
infinite dimensional setting are discussed. Section~\ref{sec:example}
focus on a practical example in which the model is applied to a
time series of continuous functions. Concluding remarks are
contained in Section~\ref{sec:conclusion}.

\section{Functional random variables}
\label{sec:rv} %
In this section we briefly introduce Banach space-valued random
variables and the extension of the notions of expectation and
covariance to this type of random variables. We also discuss Gaussian
distributions on Banach spaces. Most of the material of this section
is covered in great detail in the monographs by \citet{Bogachev:1998}
and \citet{DaPrato+Zabczyk:1992}.  We consider in the following only
separable Banach spaces. While this is a strong requirement from a
theoretical perspective, it is not a serious limitation for
applications, since almost all Banach spaces of functions used in
practice are separable, the most notable exception being probably
$L^\infty$, the space of essentially bounded (equivalence classes of)
functions on a given measurable space. The symbol $\B$, possibly with a
subscript, will be used to denote a separable, but otherwise general,
Banach space. We will use the notation $\Lin(\B, \B_1)$ for the Banach
space consisting of all continuos linear operators mapping $\B$ to
$\B_1$. In the case when $\B_1 = \B$ we will abbreviate the notation
to $\Lin(\B)$. $\B^*$ will denote the Banach space of all continuous
linear functionals on $\B$, i.e., $\B^* = \Lin(\B, \R)$.  Recall that,
for any $A\in \Lin(\B, \B_1)$, the adjoint operator $A^*\in
\Lin(\B_1^*, \B^*)$ is defined, for every $b_1^*\in \B_1^*$, to be the
element of $\B^*$ defined as $b \mapsto b_1^*(A(b))$.

A Banach space-valued random variable $X$ is a measurable function $X$
defined on a given probability space and taking values in $\B$
$$X:(\Omega, \cF) \rightarrow (\B, \cB),$$  
where $\cB$ denotes Borel $\sigma$-algebra of subsets of $\B$. The
separability assumption implies that $\cB$ is also the
$\sigma$-algebra generated by the continuous linear functionals on
$\B$, i.e., the smallest $\sigma$-algebra with respect to which all
elements of $\B^*$ are measurable.

If $X$ is a $\B$-valued random variable and $\E(\|X\|) < \infty$, then
for every $f\in\B^*$ the real-valued random variable $f(X)$ has finite
expectation, since $\E(|f(X)|) \leq \|f\| \E(\|X\|) < \infty$. Clearly
the functional $f\mapsto \E(f(X))$ is linear in $f\in\B^*$ and the
previous inequality shows that it is continuous at zero, hence
defining a continuous linear functional on $\B^*$, i.e., an element of
$\B^{**}$. It can be shown that this element of $\B^{**}$ has the form
$f\mapsto f(\mu)$ for a vector $\mu\in\B$. We call this vector the
expected value, or expectation, of $X$ and write $\mu = \E(X)$. The
expected value can be characterized as the unique element of $\B$ such
that $\E(f(X)) = f(\mu)$ $\forall f\in\B^*$. Expected values commute
with continuous linear operators in the following sense: if
$A\in\Lin(\B, \B_1)$ and $X$ is a $\B$-valued random variable with
expected value $\mu$, then the $B_1$-valued random variable $X_1 =
A(X)$ has expected value given by $\E(X_1) = A(\mu)$.

An argument along
similar lines can be used to show that if $\E(\|X\|^2) < \infty$, then
the mapping $\lambda: \B^*\times\B^*\longrightarrow \R$
specified by
$$\lambda(f, g) = \E\bigl( f(X-\mu) g(X-\mu)\bigr),\qquad f, g\in
\B^*$$ defines a bilinear function which, in turn, identifies a unique
continuous linear operator $\Lambda\in\Lin(\B^*, \B)$ via the identity
$\lambda(f, g) = f(\Lambda(g))$.  The operator $\Lambda$ is called the
covariance operator, or just covariance, of $X$, while the bilinear
function $\lambda$ is called the covariance function of
$X$. Covariance operator and covariance function are two equivalent
ways of providing the same information about the distribution of
$X$. A covariance operator is symmetric and positive, i.e.,
$g(\Lambda(f)) = f(\Lambda(g))$ and $f(\Lambda(f))\geq 0$ for all
$f,g\in \B^*$.  Unlike what happens in $\R^n$, where every symmetric,
positive definite matrix is a covariance matrix, not all symmetric and
positive elements of $\Lin(\B^*, \B)$ are valid covariance
operators. In fact, $\Lambda\in\Lin(\B^*, \B)$ is a covariance
operator if and only if there is a sequence $\{x_n\}$ in $\B$ with
$\sum_n \|x_n\|^2 <\infty$ such that $\Lambda(f) = \sum_n f(x_n) x_n$
for all $f\in\B^*$.

If $X_1$ and $X_2$ are $B_1$- and $B_2$-valued random variables,
respectively, with $\E(\|X_i\|^2)<\infty$, $i=1,2$, and expected
values $\mu_1$ and $\mu_2$, then we can define the covariance function
between $X_1$ and $X_2$ to be the bilinear operator
$\lambda_{12}:\B_1^*\times\B_2^*\longrightarrow \R$ defined by
$$\lambda_{12}(f_1, f_2) = \E\bigl( f_1(X_1-\mu_1)
f_2(X_2-\mu_2)\bigr),\qquad f_i\in \B_i^*, i=1, 2.$$ The corresponding
covariance operator $\Lambda_{1,2}\in\Lin(\B_2^*, \B_1)$ between $X_1$
and $X_2$ is determined by the relationship
$$f_1(\Lambda_{12}(f_2)) = \lambda_{12}(f_1, f_2),\qquad f_i\in
\B_i^*, i=1, 2.$$

A $\B$-valued random variable $X$ with expected value $\mu$ and
covariance $\Lambda$ has a Gaussian distribution if for every
$f\in \B^*$ the real-valued random variable $f(X)$ has a Gaussian
distribution. In this case, we write $X\sim\Nor_\B(\mu, \Lambda)$. It
is not hard to show that $X$ has a Gaussian distribution if and only
if its characteristic functional (Fourier transform) has the form
$$\psi(f) = \E\big(e^{if(X)}\big) = \exp\left\{if(\mu) -
  \tfrac{1}{2}\lambda(f, f)\right\},\qquad f\in\B^*,$$ where $\lambda$
is the covariance function associated with $\Lambda$. Using this
characterization, it is easy to see that, if $A\in \Lin(\B, \B_1)$,
then $A(X)\sim \Nor_{\B_1}(A(\mu), A\Lambda A^*)$.

Unlike what happens in the finite-dimensional case, for a general $\B$
there are valid covariance operators that are not the covariance
operator of any $\B$-valued Gaussian random variable.

To conclude this section, let us recall the definition of regular
conditional distribution. Let $Z$ be a random variable taking values
in a measurable space $(\bS, \mathcal{S})$, and let $\cG$ be a
sub-$\sigma$-algebra of $\cF$. A function $\pi: \Omega\times
\mathcal{S}: \longrightarrow \R$ is a regular conditional distribution
(r.c.d.) for $Z$ given $\cG$ if the following two conditions hold.
\begin{enumerate}
\item For every $\omega\in\Omega$, $\pi(\omega, \cdot)$ is a
  probability on $(\bS, \mathcal{S})$.
\item For every $S\in \mathcal{S}$, $\pi(\cdot, S)$ is a version of
  $\P(Z\in S \mid \cG)$.
\end{enumerate}
A standard result about r.c.d.'s is that if $\bS$ is a Polish space
with Borel $\sigma$-algebra $\mathcal{S}$, then a r.c.d. for $Z$ given
$\cG$ exists. In particular, this is the case when $\bS$ is a
separable Banach space endowed with its Borel $\sigma$-algebra.  For
notational simplicity, in the following sections we will typically
omit the explicit dependence on $\omega$ of a regular conditional
probability.

\section{Functional dynamic linear model}
\label{sec:fdlm} %
We define in this section the functional dynamic linear model (FDLM)
and we discuss the extension of Kalman filtering and smoothing
recursions, valid in the finite-dimensional case, to the case of
Banach space-valued states and observations. We assume that the reader
is familiar with the basic elements of dynamic linear models (DLMs)
from a Bayesian perspective in the standard case of finite-dimensional
states and observations, as found for example in
\citet{West+Harrison:1997} or \citet{Petris+Petrone+Campagnoli:2009}.

Let $\Fban$, the observation space, and $\Gban$, the state space, be
separable Banach spaces endowed with their Borel
$\sigma$-algebras. Consider infinite sequences $Y_1, Y_2, \dots$ and
$X_0, X_1, \dots$ of $\Fban$- and $\Gban$-valued random variables. We say
that they form a state space model if $\{X_t\}$ is a Markov
chain and, for every $t$, the conditional distribution of $Y_t$ given
all the other random variables depends on the value of $X_t$ only.
Let $F\in \Lin(\Gban, \Fban)$ and $G\in \Lin(\Gban)$. An FDLM is a
state space model satisfying the following distributional assumptions:
\begin{equation}
  \begin{aligned}
    X_0 &\sim\Nor_\Gban \big(m_0,C_0\big),\\[2pt]
    X_t|X_{t-1} = x_{t-1} &\sim \Nor_\Gban\big(G(x_{t-1}), W\big),\\[2pt]
    Y_t|X_t = x_t &\sim \Nor_\Fban\big(F(x_t), V\big),
  \end{aligned}
  \label{eq:fdlm}
\end{equation}
where $m_0\in \Gban$, $C_0$ and $W$ are covariance operators on
$\Gban$, and $V$ is a covariance operator on $\Fban$. The definition
as well as Kalman recursions, given below, can be extended in an
obvious way to time-dependent operators $F$, $G$, $V$ and $W$; we use
the time-invariant version of the model in this paper mainly for
notational simplicity. As for the finite dimensional DLM, quantities
of immediate interest related to this model are the filtering and
smoothing distributions, that is, the conditional distribution of the
state $X_t$ given the observations $Y_{1:t}$ (filtering distribution)
and the conditional distribution of $X_s$, for $s\leq t$, given
$Y_{1:t}$ (smoothing distribution). In the finite dimensional DLM all
the conditional distributions of a set of states or future
observations, given past observations, are Gaussian. This property
extends to the FDLM. One practical issue that arises in the infinite
dimensional model is that observations, while conceptually infinite
dimensional, have to be discretized at some point, in order to allow
proper data storage and processing. Clearly this discretization leads
in general to a loss of information. However, in this context, one
would hope that the inference based on the discretized data is almost
as good as the inference based on the complete, functional data, at
least if the discretized version of the data is still rich enough to
carry most of the information from the complete data. In other words,
one needs to show a continuity property of the inference -- the
filtering and smoothing distributions -- with respect to the
discretization. If the discretization is defined in a way that is
consistent with the infinite dimensional process, then one can show
that for the FDLM the continuity property mentioned above holds.

In order not to clutter the notation, we discuss the continuity of the
posterior distribution with respect to a sequence of discretizations
only in the case of one functional observation $Y$ and one functional
state $X$. Clearly, the argument extends to the FDLM setting in a
straightforward way.  Let $D_n\in \Lin (\Fban, \R^{d_n})$, $n\geq 1$,
be a sequence of linear, continuous operators. The $D_n$'s define by
composition a sequence of random variables $Y_n = D_n(Y)$, where $Y_n$
is $\R^{d_n}$-valued. We require that $\sigma(Y_n) \uparrow
\sigma(Y)$, which formally expresses the fact that the information
carried by the discretized version $Y_n$ approximates better and
better the information carried by the complete datum $Y$, coinciding
with it in the limit.  Let $\pi_n$ be a r.c.d of $X$ given $Y_n$ and
$\pi$ a r.c.d. of $X$ given $Y$. Then
$$\lim_{n\rightarrow\infty} \pi_n = \pi\qquad \text{almost surely},$$  
where the limit is in the topology of weak convergence of probability
measures. Moreover, since all the $\pi_n$ are Gaussian distributions,
and the class of Gaussian distributions is closed under the topology
of weak convergence, one can also deduce that $\pi$ is a Gaussian
distribution as well.

An example of a sequence of discretizations of the type described
above is the following. Consider $\Fban = C([0,1])$. 
For $n\geq 1$ and $1\leq k\leq 2^n$, let $q_{n,k} = k\, 2^{-n}$ and
define $D_n: C([0,1]) \longrightarrow \R^{2^n}$ by the formula
$$ D_n(y) = \big(y(q_{n,1}), \dots, y(q_{n, 2^n})\big).$$
Note that the same sequence of discretizing operators, evaluating a
function at the points of a sequence of finer and finer grids in
[0,1], would not be well defined if the functional datum $Y$ were an
element of $L^2([0,1])$, as it is often assumed in the FDA
literature. In fact, in that case the value of the function at any
given point is not even well defined, since elements of $L^2([0,1])$
are equivalence classes of functions, defined up to equality almost
everywhere.
 
Kalman filter and smoother, as well as the simulation smoother, or
forward filtering backward sampling algorithm (FFBS), which draws a
sample from the smoothing distribution, can be extended to the
FDLM. The following theorem provides the Kalman filter recursion for
the FDLM.

\begin{thm}
  \label{thm:Kalman}
  Consider the FDLM \eqref{eq:fdlm} and, for $n\geq 1$, let $D_n\in
  \Lin(\Fban, \R^{d_n})$. Define $Y_{t,n} = D_n(Y_t)$. Assume that
  $\sigma(Y_{t,n}) \uparrow \sigma(Y_t)$ and suppose that 
  $$X_{t-1}\mid Y_{1:t-1} \sim \Nor(m_{t-1}, C_{t-1}).$$
  Then the updating of the filtering distribution proceeds as follows.
  \begin{enumerate}
  \item One-step-ahead forecast distribution for the state:
    $$X_t\mid Y_{1:t-1} \sim\Nor(a_t, R_t),$$
    with $a_t = G(m_{t-1})$ and $R_t = GC_{t-1}G^* + W$.
  \item One-step-ahead forecast distribution for the observation:
    $$Y_t\mid Y_{1:t-1} \sim\Nor(f_t, Q_t),$$
    with $f_t = F(a_t)$ and $Q_t = FR_tF^* + V$.
  \item One-step-ahead forecast distribution for the discretized
    observation:
    $$Y_{t,n}\mid Y_{1:t-1} \sim\Nor(f_{t,n}, Q_{t,n}),$$
    with $f_{t,n} = D_n(f_t)$ and $Q_{t,n} = D_nQ_tD_n^*$.
  \item Filtering distribution at time $t$, given the discretized
    observation: 
    $$X_t\mid Y_{1:-1}, Y_{t,n} \sim\Nor(m_{t,n}, C_{t,n}),$$
    with $m_{t,n} = a_t + R_tF^*D_n^*Q_{t,n}^{-1} (Y_{t,n} - f_{t,n})$ and
    $C_{t,n} = R_t - R_tF^*D_n^*Q_{t,n}^{-1}D_nFR_t$.
  \item Filtering distribution at time $t$:
    $$\pi_t = \lim_{n\rightarrow\infty} \pi_{t,n}\qquad \text{a.s.},$$
    where $\pi_{t,n}$ is a r.c.d. for $X_t$ given $(Y_{1:t-1},
    Y_{t,n})$ and $\pi_t$ is a r.c.d. for $X_t$ given
    $Y_{1:t}$. Moreover, $\pi_t$ is a.s. a Gaussian distribution.
  \end{enumerate}
\end{thm}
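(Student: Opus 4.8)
The plan is to treat parts~1--3 as direct consequences of the transformation rules for Gaussian measures recalled in Section~\ref{sec:rv}, part~4 as a purely finite-dimensional Gaussian conditioning computation, and part~5 as a martingale-convergence argument. Throughout I work conditionally on $Y_{1:t-1}$, so that the stated inductive hypothesis $X_{t-1}\mid Y_{1:t-1}\sim\Nor(m_{t-1},C_{t-1})$ is available, and I repeatedly use the fact that a Gaussian law on a separable Banach space is determined by the univariate Gaussian laws of its one-dimensional projections $f(\cdot)$, $f$ ranging over the dual.

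First I would establish part~1. By the state-space (conditional independence) structure, $X_t$ is independent of $Y_{1:t-1}$ given $X_{t-1}$, so conditionally on $Y_{1:t-1}$ one may represent $X_t=G(X_{t-1})+\eta_t$ with $\eta_t\sim\Nor_\Gban(0,W)$ independent of $X_{t-1}$. Since $X_{t-1}\mid Y_{1:t-1}$ is Gaussian, the pushforward rule $A(X)\sim\Nor(A\mu,A\Lambda A^*)$ gives $G(X_{t-1})\sim\Nor(G(m_{t-1}),GC_{t-1}G^*)$, and adding the independent noise yields $X_t\mid Y_{1:t-1}\sim\Nor(a_t,GC_{t-1}G^*+W)=\Nor(a_t,R_t)$. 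Part~2 is identical in structure: conditionally on $Y_{1:t-1}$ one has $Y_t=F(X_t)+\varepsilon_t$ with $\varepsilon_t\sim\Nor_\Fban(0,V)$ independent of $X_t$, so the same pushforward applied to $\Nor(a_t,R_t)$ produces $f_t=F(a_t)$ and $Q_t=FR_tF^*+V$. Part~3 is then the single pushforward $Y_{t,n}=D_n(Y_t)$ applied to the law from part~2, giving $f_{t,n}=D_n(f_t)$ and $Q_{t,n}=D_nQ_tD_n^*$.

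The substantive computation is part~4, and the finite dimension of $Y_{t,n}$ is exactly what makes it routine. Conditionally on $Y_{1:t-1}$, the pair $(X_t,Y_{t,n})$ is jointly Gaussian, and from $Y_{t,n}=D_nF(X_t)+D_n\varepsilon_t$ one reads off the cross-covariance operator $\Cov(X_t,Y_{t,n}\mid Y_{1:t-1})=R_tF^*D_n^*\in\Lin(\R^{d_n},\Gban)$. To condition on the $\R^{d_n}$-valued $Y_{t,n}$ I would reduce to one dimension: for each $f\in\Gban^*$ the vector $(f(X_t),Y_{t,n})$ is Gaussian in $\R^{1+d_n}$, so the classical finite-dimensional conditioning formula gives the law of $f(X_t)$ given $Y_{t,n}$, with conditional mean $f(a_t)+f(R_tF^*D_n^*)Q_{t,n}^{-1}(Y_{t,n}-f_{t,n})$ and conditional variance $f(R_tf)-f(R_tF^*D_n^*)Q_{t,n}^{-1}D_nFR_tf$. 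These are precisely the images under $f$ of the candidate Gaussian measure $\Nor(m_{t,n},C_{t,n})$ with the stated $m_{t,n}$ and $C_{t,n}$; invoking the determination of a Gaussian law by its one-dimensional projections identifies the conditional distribution and proves part~4.

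Finally, for part~5 I would exploit that $\sigma(Y_{1:t-1},Y_{t,n})\uparrow\sigma(Y_{1:t-1},Y_t)=\sigma(Y_{1:t})$. For any bounded continuous $\phi:\Gban\to\R$, Lévy's upward convergence theorem gives $\E(\phi(X_t)\mid Y_{1:t-1},Y_{t,n})\to\E(\phi(X_t)\mid Y_{1:t})$ almost surely, which in terms of the regular conditional distributions reads $\int\phi\,d\pi_{t,n}\to\int\phi\,d\pi_t$ a.s. The hard part is to upgrade this function-by-function convergence, each valid only off its own null set, to almost sure \emph{weak} convergence $\pi_{t,n}\Rightarrow\pi_t$. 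Here separability of $\Gban$ is essential: weak convergence is then metrizable and tested by a countable convergence-determining family $\{\phi_k\}$ of bounded (Lipschitz) functions, so applying the martingale convergence to each $\phi_k$ and discarding the countable union of the associated null sets yields $\int\phi_k\,d\pi_{t,n}\to\int\phi_k\,d\pi_t$ simultaneously off a single null set, whence $\pi_{t,n}\Rightarrow\pi_t$ almost surely. Gaussianity of $\pi_t$ then follows because each $\pi_{t,n}$ is Gaussian by part~4 and the class of Gaussian measures is closed under weak convergence.
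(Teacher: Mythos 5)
Your proposal is correct in substance, but it follows a genuinely different route from the paper's in every part. For parts 1 and 2 the paper does not use a noise representation: it computes the conditional characteristic functional of the pair $(X_t, Y_t)$ given $Y_{1:t-1}$ in one tower-property calculation, obtaining in a single sweep the joint Gaussian law with mean $(a_t, f_t)$ and block covariance having diagonal blocks $R_t$, $Q_t$ and off-diagonal blocks $R_tF^*$, $FR_t$; parts 1--3 are then read off as marginals and pushforwards. Your modular version (represent $X_t = G(X_{t-1}) + \eta_t$ and $Y_t = F(X_t) + \varepsilon_t$ conditionally on $Y_{1:t-1}$, then push forward) is equivalent, but note that the representation itself rests on the state-space conditional independence $X_t \perp Y_{1:t-1} \mid X_{t-1}$ together with the characteristic-functional characterization of Gaussian sums --- essentially the paper's computation in disguise --- and it forces you to derive the cross-covariance $R_tF^*D_n^*$ separately in part 4, which the paper's joint computation yields for free. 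For part 4 the paper simply cites the theorem on Normal correlation, remarking that its proof carries over when the conditioning variable is finite dimensional; your projection argument is a legitimate direct proof, with one technical point to patch: the classical conditioning formula identifies the law of $f(X_t)$ given $Y_{t,n}$ only off a null set depending on $f$, and $\Gban^*$ is uncountable, so to conclude that the r.c.d. is a.s. equal to $\Nor(m_{t,n}, C_{t,n})$ you should either verify the defining kernel identity through the joint characteristic functional $\E\bigl(e^{i(f(X_t) + s\cdot Y_{t,n})}\bigr)$ at each fixed $(f, s)$ --- an identity of deterministic integrals, hence free of null sets --- or pass through a weak-$*$ sequentially dense countable subset of the dual ball, exactly the countability device you correctly deploy in part 5. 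For part 5 your proposal actually adds value: the paper defers to the discretization result stated, without proof, in the discussion preceding the theorem, whereas your L\'evy upward martingale argument with a countable convergence-determining family of bounded Lipschitz functions (legitimate since $\Gban$ is separable, so weak convergence is metrizable), together with the observation that $\sigma(Y_{1:t-1}, Y_{t,n}) \uparrow \sigma(Y_{1:t})$ and the closedness of Gaussian laws under weak limits, is precisely the proof the paper omits.
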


\begin{proof}[Proof of Theorem~\ref{thm:Kalman}]
  We will first derive the joint conditional distribution of $(X_t,
  Y_t)$ given $Y_{1:t-1}$, from which the conditional distributions in
  1 and 2 will easily follow. The dual of $\Gban\times \Fban$ can be
  identified with $\Gban^*\times \Fban^*$ noting that the element
  $(x^*, y^*) \in \Gban^*\times \Fban^*$ can be associated to the
  element of $(\Gban\times \Fban)^*$ defined by
  $$\widetilde{(x^*, y^*)}: (x, y) \mapsto x^*(x) + y^*(y).$$
  Moreover, every element in $(\Gban\times \Fban)^*$ has that form for
  a unique choice of $x^*$ and $y^*$. We will make use of the
  following matrix notation for operators. If $A\in \Lin(\A, \A_1)$,
  $B\in \Lin(\B, \A_1)$, $C\in \Lin(\A, \B_1)$, and $D\in \Lin(\B,
  \B_1)$, the matrix
  $$
  \begin{bmatrix}
    A & B\\ C & D
  \end{bmatrix}
  $$
  denotes the element of $\Lin(\A\times\B, \A_1\times\B_1)$ defined by
  $$\A\times\B \ni (a, b) \mapsto \big(A(a) + B(b), C(a) +
  D(b)\big).$$ 
  It is easy to show that any element of $\Lin(\A\times\B,
  \A_1\times\B_1)$ can be uniquely represented in the matrix form
  written above.
  Let us compute the conditional characteristic functional of $(X_t,
  Y_t)$ given $Y_{1:t-1}$. For $x^*\in \Gban^*$ and $y^*\in \Fban^*$
  we have
  \begin{align*}
    \psi(x^*, y^*) &= \E\Big(\exp\bigg\{i\big( x^*(X_t) +
    y^*(Y_t)\big) \bigg\}| Y_{1:t-1}\Big)\\
    &= \E\Big(\E\Big(\exp\bigg\{i\big( x^*(X_t) + y^*(Y_t)\big)
    \bigg\}| X_t, Y_{1:t-1}\Big) | Y_{1:t-1}\Big) \\
    &= \E\Big(\exp\bigg\{i x^*(X_t) + i y^*(F(X_t)) - \frac{1}{2}
    y^*(V(y^*)) \bigg\}| Y_{1:t-1}\Big)\\
    &= \E\Big(\exp\bigg\{i \big(x^* + F^*y^*\big) (X_t) \bigg\}|
    Y_{1:t-1}\Big) \exp \bigg\{ -\frac{1}{2} y^*(V(y^*)) \bigg\}\\
    &= \E\Big(\E\Big(\exp\bigg\{i \big(x^* + F^*y^*\big) (X_t)
    \bigg\}| X_{t-1}, Y_{1:t-1}\Big)| Y_{1:t-1}\Big) \exp \bigg\{
    -\frac{1}{2} y^*(V(y^*)) \bigg\}\\
    &= \E\Big(\exp\bigg\{i \big(x^* + F^*y^*\big) (G(X_{t-1}))
    \bigg\}|
    Y_{1:t-1}\Big)\\
    &\qquad \exp \bigg\{ -\frac{1}{2} \Big[ (x^* + F^*y^*)\big(W(x^* +
    F^*y^*)\big) +
    y^*(V(y^*)) \Big] \bigg\}\\
    &= \exp\bigg\{i \big(x^* + F^*y^*\big) (G(m_{t-1}))\\
    &\qquad -\frac{1}{2}\Big[ (x^*G + F^*y^*G)\big( C_{t-1}(x^*G +
    F^*y^*G) \big)\\
    &\qquad + (x^* + F^*y^*)\big(W(x^* + F^*y^*)\big) +
    y^*(V(y^*)) \Big] \bigg\}\\
    &= \exp\bigg\{i \big[ x^*(G(m_{t-1})) + y^*(FG(m_{t-1})) \big]\\
    &\qquad -\frac{1}{2}\Big[ x^*\big((GC_{t-1}G^* + W)(x^*)\big)\\
    &\qquad + x^*\big((GC_{t-1}G^*F^* + WF^*)(y^*)\big)
    + y^*\big((FGC_{t-1}G^* + FW)(x^*)\big)\\
    &\qquad + y^*\big((FGC_{t-1}G^*F^* + FWF^* + V)(y^*)\big) \Big]\bigg\}
  \end{align*}
  This shows that the conditional distribution of $(X_t, Y_t)$ given
  $Y_{1:t-1}$ is Gaussian with mean $(a_t, f_t)$ and covariance
  operator 
  $$
  \begin{bmatrix}
    R_t & R_tF^*\\ FR_t & Q_t
  \end{bmatrix},
  $$
  from which parts 1 and 2 of the theorem follow. Part 3 is an
  immediate consequence of part 2, when one considers how Gaussian
  distributions transform under the application of a continuous linear
  operator. As far as part 4 is concerned, if $X_t$ were a finite
  dimensional random variable, then the result would be a
  straightforward application of the well-known theorem on Normal
  correlation \citep{Lipster+Shiryayev:1972, Barra:1981}. It is simple
  to verify that the proof of that result carries over to the case
  where $X_t$ is a Banach space-valued random variable, as long as the
  conditioning random variable is finite dimensional. Finally, part 5
  of the theorem follows from the result on discretization of
  observations discussed above.
\end{proof}

As far as the smoothing distribution is concerned, since under our
modelling assumptions the joint distribution of $(X_{0:t}, Y_{1:t})$
is Gaussian, all its marginal distributions, including that of $(X_s,
Y_{1:t})$ are Gaussian as well. It follows that the conditional
distribution of $X_s$ given $Y_{1:t}$ is again Gaussian; that is, the
smoothing distribution of $X_s$ is Gaussian, as in the finite
dimensional setting. In general, however, the smoothing means and
covariances do not have a simple explicit form. For the purpose of
applications this is not a big impediment, since the analysis is
always performed on a discretized version of the data, to which the
usual smoothing recurrence applies
\citep{Petris+Petrone+Campagnoli:2009}. The inference obtained from
the discretized version of the FDLM converges, as the discretization
gets finer, to the inference that one would obtain from the complete
functional data, by the argument discussed before
Theorem~\ref{thm:Kalman}.


\section{Application to $C([0,1])$-valued time series}
\label{sec:example} %
In order to define a Gaussian $\B$-valued random variable one can
rely, when $\B$ is a Banach space of functions, on the theory of
stochastic processes. When $\B = C([0,1])$, a stochastic process with
continous sample paths can be interpreted as a $\B$-valued random
variable. Let us spell out the equivalence, which will be used in the
rest of the present section. Suppose $\zeta =
\{\zeta_t: t\in [0,1]\}$ is a stochastic process with continous
trajectories. Note that every $\zeta_t$ is a random variable, i.e.,
$\zeta_t = \zeta_t(\omega)$. Then, since the sample paths are
continuous, one can define the function $\tilde\zeta: \Omega
\longrightarrow C([0,1])$ by setting 
\begin{equation} 
  \tilde\zeta(\omega): t \mapsto \zeta_t(\omega),\qquad 
  \text{$t\in [0,1]$ and $\omega\in \Omega$.}
  \label{eq:process}
\end{equation}
The following theorem shows that $\tilde\zeta$ is a $C([0,1])$-valued
random variable and specifies its mean and covariance function. In
addition, it shows that $\tilde\zeta$ has a Gaussian distribution if
the process $\zeta$ does. Recall that, by Riesz representation
theorem, $C([0,1])^*$ can be identified with the Banach space of all
signed measures on the Borel sets of [0,1], denoted below by
$\mathcal{M}([0,1])$. For $\eta\in \mathcal{M}([0,1])$ and $x\in
C([0,1])$ we will use the notation
$$\eta(x) := \int_{[0,1]} x(t)\eta(dt).$$ 

\begin{thm}
  \label{thm:process}
  For the function defined in \eqref{eq:process}, the following hold.
  \begin{enumerate}
  \item $\tilde\zeta$ is a measurable function from
    $(\Omega, \cF)$ to the Banach space $C([0,1])$ endowed wih its Borel
    $\sigma$-algebra.
  \item If, in addition, the process $\zeta$ possesses second moments,
    then the expected value and the covariance function of
    $\tilde\zeta$ are given by
    \begin{gather*}
      C([0,1])\ni\E(\tilde\zeta): t\mapsto \E(\zeta_t),\\[5pt]
      \lambda(\eta, \tau) = \int_{[0,1]^2} \gamma(u,v)
      \eta(du)\tau(dv),\qquad \eta, \tau \in\mathcal{M}([0,1]),
    \end{gather*}
    where $\gamma(u,v) = \Cov(\zeta_u, \zeta_v)$. The covariance
    operator of $\tilde\zeta$ is
    \begin{equation}
      \label{eq:covop_OU}
      \Lambda(\eta) = \int_{[0,1]} \gamma(u, \cdot)\, \eta(du), \qquad
      \eta\in \mathcal{M}([0,1]).
    \end{equation}
  \item If, in addition, the process $\zeta$ is Gaussian, then
    $\tilde\zeta$ has a Gaussian distribution.
  \end{enumerate}
\end{thm}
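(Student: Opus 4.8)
The plan is to handle the three parts in order, using throughout the identification $C([0,1])^* = \mathcal{M}([0,1])$ and the fact, recalled in Section~\ref{sec:rv}, that separability of $C([0,1])$ makes its Borel $\sigma$-algebra $\cB$ coincide with $\sigma(\B^*)$. The computational device underlying all three parts is the approximation of $\eta(\tilde\zeta)$ by Riemann sums along the fixed dyadic partition $q_{n,k}=k2^{-n}$: writing $I_{n,k}$ for the $n$-th level dyadic intervals and $S_n(\omega)=\sum_k \zeta_{q_{n,k}}(\omega)\,\eta(I_{n,k})$, the uniform continuity of each path gives the estimate $|S_n(\omega)-\eta(\tilde\zeta(\omega))|\le \varpi_\omega(2^{-n})\,|\eta|([0,1])$, where $\varpi_\omega$ is the modulus of continuity of $\tilde\zeta(\omega)$ and $\eta(\tilde\zeta(\omega))=\int \zeta_t(\omega)\,\eta(dt)$. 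Hence $S_n(\omega)\to\eta(\tilde\zeta(\omega))$ for every fixed $\omega$ and every fixed $\eta\in\mathcal{M}([0,1])$. For part~1 this is enough: to prove $\tilde\zeta$ is $\cB$-measurable it suffices, by $\cB=\sigma(\B^*)$, to check that $\omega\mapsto\eta(\tilde\zeta(\omega))$ is $\cF$-measurable for each $\eta$; each $S_n$ is a finite linear combination of the random variables $\zeta_{q_{n,k}}$, hence measurable, and the pointwise limit of measurable functions is measurable.

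For part~2 I would first read ``possesses second moments'' as $\E\|\tilde\zeta\|^2=\E\big(\sup_t\zeta_t^2\big)<\infty$, which is exactly what Section~\ref{sec:rv} requires for $\tilde\zeta$ to admit an expectation and a covariance operator. The formula for the mean then follows by evaluating the characterizing identity $\eta(\E\tilde\zeta)=\E(\eta(\tilde\zeta))$ at $\eta=\delta_t$, which gives $\E(\tilde\zeta)(t)=\E(\zeta_t)$; continuity of $t\mapsto\E(\zeta_t)$ comes from dominated convergence with dominating variable $\|\tilde\zeta\|$. For the covariance function I would expand $\eta(\tilde\zeta-\mu)\,\tau(\tilde\zeta-\mu)$ as the double integral of $(\zeta_u-\mu(u))(\zeta_v-\mu(v))$ against $\eta\otimes\tau$ and take expectations, moving $\E$ inside by Fubini; the interchange is justified by Cauchy--Schwarz together with $\E\|\tilde\zeta-\mu\|^2<\infty$ and the finiteness of $|\eta|$ and $|\tau|$. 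This yields $\lambda(\eta,\tau)=\int\!\!\int\gamma(u,v)\,\eta(du)\tau(dv)$, and the stated $\Lambda$ is read off from $\eta(\Lambda\tau)=\lambda(\eta,\tau)$ once one checks, via dominated convergence and the joint continuity of $\gamma$, that $t\mapsto\int\gamma(u,t)\,\eta(du)$ lies in $C([0,1])$ so that $\Lambda$ genuinely maps $\mathcal{M}([0,1])$ into $C([0,1])$.

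For part~3 I would reuse the same Riemann sums. When $\zeta$ is Gaussian, each $S_n$ is a finite linear combination of the jointly Gaussian variables $(\zeta_{q_{n,k}})_k$ and is therefore a univariate Gaussian, and $S_n\to\eta(\tilde\zeta)$ almost surely, hence in distribution. Since the class of (possibly degenerate) univariate Gaussians is closed under convergence in distribution, $\eta(\tilde\zeta)$ is Gaussian for every $\eta\in\mathcal{M}([0,1])$, which is precisely the definition of $\tilde\zeta$ being a Gaussian $C([0,1])$-valued random variable.

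The main obstacle I anticipate is not any single computation but the norm-integrability invoked in part~2: pointwise second moments $\sup_t\E\zeta_t^2<\infty$ do not by themselves force $\E\sup_t\zeta_t^2<\infty$, and it is the latter that the Banach-space notions of expectation and covariance from Section~\ref{sec:rv} require. I would therefore be careful to state the second-moment hypothesis in the strong supremum form and to use it, rather than pointwise moments, at every place where integrability, dominated convergence, or the Fubini interchange is needed.
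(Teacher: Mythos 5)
Your proof is correct, and for parts 2 and 3 it follows essentially the same route as the paper: Fubini's theorem (the paper additionally reduces to positive measures via the Jordan decomposition) yields the mean and covariance formulas, and Gaussianity of $\eta(\tilde\zeta)$ is obtained by approximating it with finite linear combinations of grid values of $\zeta$, which are Gaussian, and invoking closure of the Gaussian class under weak convergence. Three differences are worth recording. First, for part 1 the paper simply cites \citet{Bosq:2000}, Example 1.10, whereas you give a self-contained argument via $\cB = \sigma(\B^*)$ and measurability of the Riemann sums $S_n$; the same sums then do triple duty across all three parts, which is economical. Second, in part 3 the paper routes the approximation through a piecewise-constant discretization operator $x\mapsto x_n$ and asserts that $\tilde\zeta_n$ is again a $C([0,1])$-valued random variable; as written the image functions are step functions, hence not continuous, so the paper's formulation contains a small slip that your scalar-level version (working directly with $S_n\to\eta(\tilde\zeta)$ surely, via the modulus-of-continuity bound) silently repairs. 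Third, and most substantively, your insistence on reading ``possesses second moments'' as $\E\|\tilde\zeta\|^2 = \E\bigl(\sup_t \zeta_t^2\bigr) < \infty$ addresses a point the paper glosses over: under merely pointwise second moments the Fubini interchanges are not justified in general, $\E(\tilde\zeta)$ need not exist in the sense of Section~\ref{sec:rv}, and $t\mapsto\E(\zeta_t)$ need not be continuous; the strong hypothesis (automatic in the Gaussian case by Fernique's theorem, but needed in general) is the right one, and your use of it to dominate every interchange, together with your explicit check that $\Lambda$ actually maps $\mathcal{M}([0,1])$ into $C([0,1])$ --- a step the paper dismisses as immediate --- makes your account the more rigorous of the two.
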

\begin{proof}
  \begin{enumerate}
  \item See \citet{Bosq:2000}, Example 1.10.
  \item Let $\eta, \tau\in\mathcal{M}([0,1])$. In view of Jordan
    decomposition $\eta = \eta^+ - \eta^-$, so we can assume, without
    real loss of generality, that $\eta$ and $\tau$ are positive measures.
    By a straightforward application of Fubini's theorem, we have
    \begin{align*}
      \E\bigg(\int_{[0,1]} \tilde\zeta(t) \eta(dt)\bigg) &=
      \int_\Omega\int_{[0,1]} \zeta_t(\omega) \eta(dt) P(d\omega)\\
      &= \int_{[0,1]}\int_\Omega \zeta_t(\omega) P(d\omega) \eta(dt)\\
      &= \int_{[0,1]} \E(\zeta_t) \eta(dt) = \eta\big(\E(\zeta_{\cdot})\big).
    \end{align*}
    Let $m = \E(\tilde\zeta)$. Then, using Fubini's theorem,
    \begin{align*}
      \lambda(\eta, \tau) &= \E\big(\eta(\tilde\zeta - m)
      \tau(\tilde\zeta - m) \big)\\
      &= \int_\Omega \Big(\int_{[01]} \big(\zeta_u(\omega) -
      m(u)\big) \eta(du)\Big) \Big(\int_{[0,1]} \big(\zeta_v(\omega)
      - m(v)\big) \tau(dv)\Big) \P(d\omega)\\
      &= \int_\Omega \Big(\int_{[01]^2} \big(\zeta_u(\omega) -
      m(u)\big) \big(\zeta_v(\omega) - m(v)\big) \eta(du)\tau(dv)
      \Big) \P(d\omega)\\
      &= \int_{[01]^2}\Big( \int_\Omega \big(\zeta_u(\omega) -
      m(u)\big) \big(\zeta_v(\omega) - m(v)\big) \P(d\omega) \Big)
      \eta(du)\tau(dv)\\
      &= \int_{[01]^2} \gamma(u,v) \eta(du)\tau(dv) . 
    \end{align*}
    The form of the covariance operator follows immediately from the
    expression giving the covariance function.
  \item Let $\eta\in\mathcal{M}([0,1])$ and consider the
    discretization operator defined for any $x\in C([0,1])$ by
    $$x_n(t) = \I_{\{0\}}(t) x(0) + \sum_{j=1}^n \I_{\Delta_{n,j}}(t)
    x(j/n),$$
    where 
    $$\Delta_{n,j} = \left(\frac{j-1}{n}, \frac{j}{n}\right].$$
    It is clear that the map $x\mapsto x_n$ is non-expansive, i.e.,
    $\| y_n - x_n \| \leq \|y - x\|$, and therefore continuous. It
    follows that by applying this operator to $\tilde\zeta$ we obtain
    another $C([0,1])$-valued random variable, say
    $\tilde\zeta_n$. For any fixed $\omega\in\Omega$,
    $$\eta\big(\tilde\zeta(\omega)\big) = \eta(\{0\}) \zeta_0(\omega)
    + \sum_{j=1}^n \eta(\Delta_{n,j}) \zeta_{j/n}(\omega).$$ Since the
    stochastic process $\zeta$ is Gaussian, the joint distribution of
    $(\zeta_0, \zeta_{1/n},\ldots, \zeta_1)$ is Gaussian, hence
    $\eta(\tilde\zeta)$ is Gaussian as well.  It is also easy to show
    that, with probability one, $\tilde\zeta_n\rightarrow \tilde\zeta$
    as $n\rightarrow\infty$. Since $\eta$, as a functional on
    $C([0,1])$, is continuous, it follows that
    $\eta(\tilde\zeta_n)\rightarrow \eta(\tilde\zeta)$ almost surely
    and, a fortiori, $\eta(\tilde\zeta_n)\stackrel{d}{\rightarrow}
    \eta(\tilde\zeta)$. Since the class of Gaussian distributions on
    $\R$ is closed with respect to the topology of weak convergence,
    we conclude that $\eta(\tilde\zeta)$ has a Gaussian distribution.
  \end{enumerate}
\end{proof}

In our numerical example below we will make extensive use of
Theorem~\ref{thm:process}, using it to define Gaussian
$C([0,1])$-valued random variables starting from the
Ornstein-Uhlenbeck process, having mean zero and covariance function
\begin{equation}
  \label{eq:covfun_OU}
  \gamma(u, v) = \frac{\sigma^2}{2\beta} \exp \bigl\{-\beta |u-v|
  \bigr\},
\end{equation}
where $\sigma^2$ and $\beta$ are positive parameters. These
random variables will be used, in turn, as building blocks to set up
an FDLM.

We consider a data set consisting in hourly measurements on the log
scale of electricity demand, over the previous hour, collected at
a distribution station in the Northeastern region of the United States
from January 2006 to December 2010. We consider the data to be a
discretized version of a daily functional time series. Since it is
reasonable to assume that electricity demand follows a continuous
path over time, we will model the data as $C([0,1])$-valued random
variables. Figure~\ref{fig:dataFull} shows the full data set.
\begin{figure}
  \centering
  \includegraphics[draft=false]{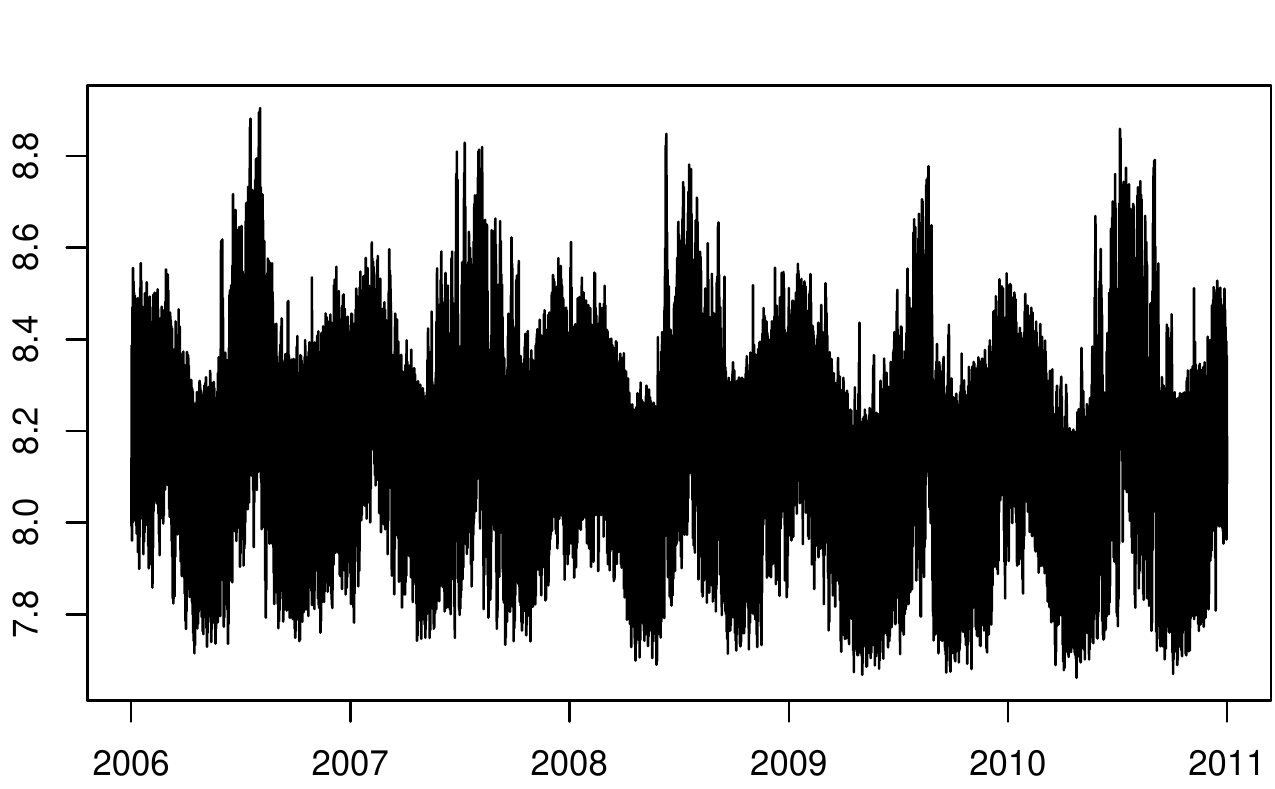}
  \caption{Electricity demand}
  \label{fig:dataFull}
\end{figure}

For scalar time series, a specific DLM that has been successfully used
to model observations with a constant or slowly changing mean is the
so-called local level model. This simply consists in a random walk for
a univariate state, which is observed with noise. The model can be
immediately extended to functional data, setting $\Fban = \Gban$, $F =
G = \1_{\Gban}$ in \eqref{eq:fdlm}, where, for any Banach space $\B$,
$\1_\B$ denotes the identity operator on $\B$. We take $m_0$ to be the
zero element of $C([0,1])$, and the covariance operators $C_0$, $W$
and $V$ to be of the form~\eqref{eq:covop_OU}, with $\gamma(u, v)$
specified in~\eqref{eq:covfun_OU}. The parameters $\sigma^2$ and
$\beta$ in ~\eqref{eq:covfun_OU} are different for the three
covariance operators and, while we fix their value when we define
$C_0$, so as to obtain a prior distribution for the initial state that
is only vaguely informative, we estimate the parameters of $V$ and
$W$, $(\sigma^2_V, \beta_V)$ and $(\sigma^2_W, \beta_W)$,
respectively. The inference was carried out using MCMC, simulating in
turn the latent states via the forward filtering backward sampling
(FFBS) algorithm \citep{Carter+Kohn:1994, Fruewirth-Schnatter:1994,
  Shephard:1994}, and the parameters $\sigma^2_V, \beta_V, \sigma^2_W,
\beta_W$. We coded the sampler in the statistical programming language
R \citep{R}, using also the contributed packages \texttt{zoo}
\citep{Zeileis+Grothendieck:2005} for data manipulation and graphing,
and \texttt{dlm} \citep{DLM:2010}, which contains an implementation of
FFBS.
\begin{figure}
  \centering
  \includegraphics[draft=false]{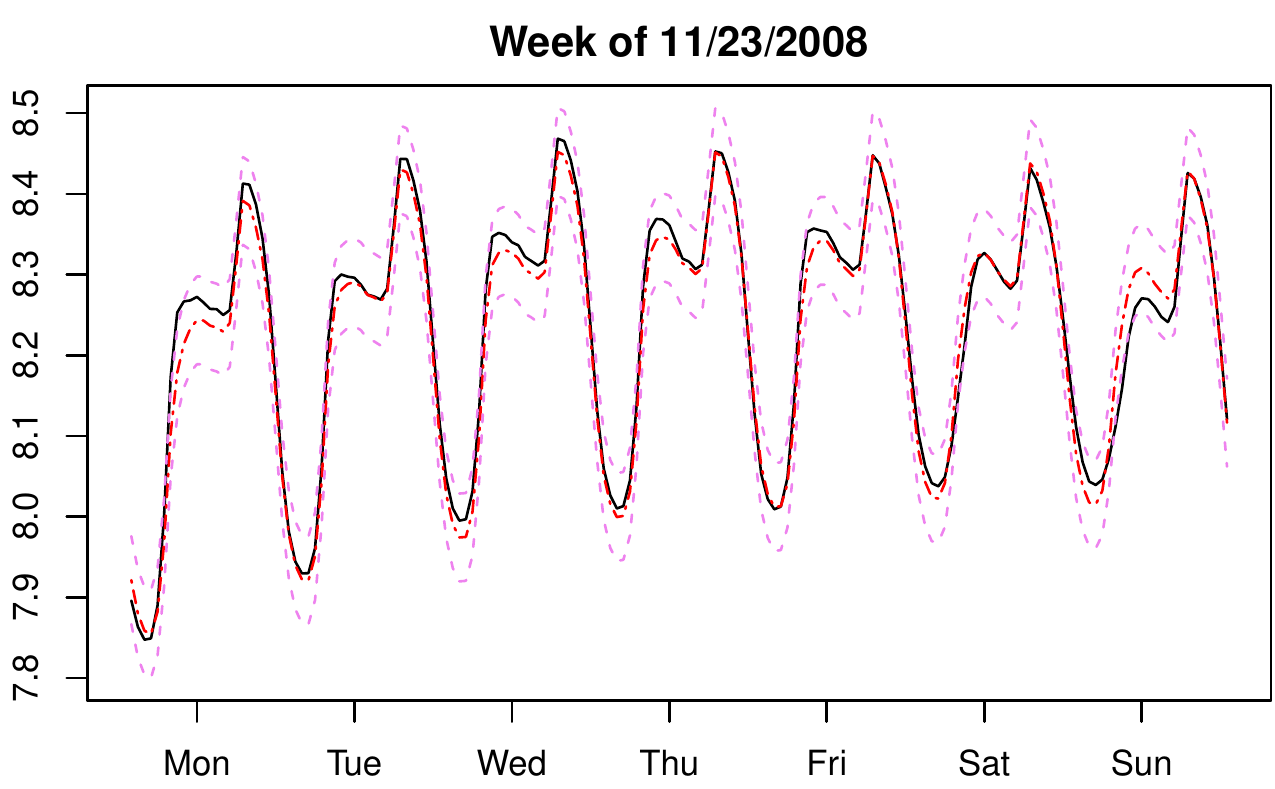}
  \caption{One week of electricity demand (solid line), with smoothed
    demand (two dash) and 90\% probability bands (dashed)}
  \label{fig:smooth_week}
\end{figure}
The prior used for the
two variance parameters is an inverse gamma, which is 
conditionally conjugate for this particular model, when the latent
states are included in the simulation, while the two remaining
parameters $\beta_V$ and $\beta_W$ were updated with a random walk
Metropolis-Hastings step on the log
scale. Figure~\ref{fig:smooth_week} displays, for one particular week,
the observations together with the smoothed states for those seven
days, and 90\% probability bands, obtained from the MCMC output. The
fit is good, showing that the functional model, despite the small
number of parameters, is flexible enough to adapt and learn the
general daily pattern of electricity demand on any given day. 

\begin{table}
  \centering
  \begin{tabular}{cccc}
    \hline\hline
    $\sigma^2_V$ & $\log \beta_V$ & $\sigma^2_W$ & $\log \beta_W$\\
    \hline
       $2.76\cdot 10^{-04}$ &  $-2.83$ &  $2.14\cdot 10^{-04}$ &  $-3.23$\\
       $9.86\cdot 10^{-08}$ & $2.30\cdot 10^{-03}$ &  $1.33\cdot
       10^{-07}$ & $2.51\cdot 10^{-03}$\\
       $(2.70, 2.81)\cdot 10^{-4}$ & $(-2.89, -2.76)$ & $(2.09,
       2.20)\cdot 10^{-4}$ & $(-3.30, -3.16)$\\
       \hline\hline
  \end{tabular}
  \caption{Posterior estimates, with Monte Carlo standard errors and
    90\% posterior probability intervals}
  \label{tab:par_est}
\end{table}
In terms of the inference on the model parameters,
Table~\ref{tab:par_est} summarizes posterior estimates of the four
parameters, together with MC standards errors and 90\% posterior
probability intervals. MC standard errors are computed using Sokal's
estimator \citep{Sokal:1989}, as implemented in the R package
\texttt{dlm}.

\section{Conclusions}
\label{sec:conclusion} %
The model presented in the paper is an important step forward in the
methodology of analysis of functional time series. For such kind of
data it provides a much more flexible setting compared to functional
ARMA models \citep{Bosq:2000, Horvath+Kokoszka:2012}. The FDLM allows
to extend to the functional setting most of the standard structural
time series models \citep{Harvey:1989} that have proved extremely
useful for the analysis and forecasting of finite dimensional time
series. Among the advantages of the FDLM proposed in the paper, we
note that the specification of a particular model is in most cases
relatively straightforward, as illustrated in
Section~\ref{sec:example}, and the practical implementation of the
posterior sampling can be done using standard MCMC algorithms.

\clearpage
\bibliography{fdlm}

\end{document}